\documentclass[letterpaper]{article} 
\usepackage[preprint]{aaai2027}
\usepackage[hyphens]{url}  
\usepackage{graphicx} 
\usepackage{natbib}  
\usepackage{caption} 
\usepackage{booktabs}
\usepackage{multirow}
\usepackage{amsmath,amssymb,amsthm}

\newcommand{\suppref}[1]{Appendix~\ref{#1}}
\newcommand{\qeb}{QEncodeBench}
\newcommand{\Sf}{S_f}
\newcommand{\Shat}{\widehat{S}}
\newcommand{\passk}[1]{\text{pass@}#1}
\DeclareMathOperator{\wcw}{w}

\newtheorem{definition}{Definition}
\newtheorem{proposition}{Proposition}
\newtheorem{remark}{Remark}

\title{\qeb: Can Large Language Models Encode Classical Problems into Verified Quantum Oracles?}
\author{
    Xujun Che\textsuperscript{\rm 1}\corresponding,
    Hanhan Wu\textsuperscript{\rm 2},
    Yuchen Yuan\textsuperscript{\rm 3},
    Chenyang Yu\textsuperscript{\rm 4}
}
\affiliations{
    \textsuperscript{\rm 1}Department of Cybersecurity, University of North Carolina at Charlotte, Charlotte, NC, USA\\
    \textsuperscript{\rm 2}Department of Electrical and Computer Engineering, George Mason University, Fairfax, VA, USA\\
    \textsuperscript{\rm 3}Department of Information Sciences and Technology, George Mason University, Fairfax, VA, USA\\
    \textsuperscript{\rm 4}Department of Computer Science and Engineering, University of North Texas, Denton, TX, USA\\
    xche@charlotte.edu, hwu28@gmu.edu, yyuan21@gmu.edu, chenyangyu@my.unt.edu
}

\begin{document}
\maketitle

\begin{abstract}
Grover search, amplitude amplification, and quantum counting all rely on the same reusable subroutine, a phase oracle, whose construction the algorithms literature takes as given: the classical predicate is assumed to be already encoded as a correct, resource-bounded circuit. We turn this assumption into a measured capability. \qeb{} tasks large language models (LLMs) with encoding classical constraint problems as phase oracles and scores the generated circuits with an adversarially self-validated verifier that decides full solution-set equivalence up to a global phase, with ancillas restored and resource budgets enforced. Sampled basis-state tests, we show, systematically overestimate this ability. Measured this way, models separate sharply: code models without a reasoning mode solve essentially nothing, and enabling native reasoning on identical weights improves accuracy by an order of magnitude. The failures are overwhelmingly semantic rather than syntactic. Two architectures, a unit-verified constraint agent and a neuro-symbolic compilation pipeline, close most of the remaining gap by delegating correctness-critical composition to deterministic procedures. Ablations quantify the contribution of each component, and resource gating exposes an architecture-dependent trade-off between circuit width and depth. Finally, controlled difficulty escalation reveals architecture-specific responses to difficulty structure: different difficulty axes degrade different methods, while the neuro-symbolic pipeline passes every evaluated instance. Code and data are available at \url{https://github.com/chexujun/QEncodeBench}.
\end{abstract}

\section{Introduction}
\label{sec:intro}

Most celebrated quantum speedups are conditional on an input the
literature rarely constructs: the \emph{oracle}. Grover's algorithm
assumes a unitary that flips the phase of exactly the assignments
satisfying a classical predicate, and assumes the predicate can be
evaluated ``in unit time''~\citep{grover1996}; amplitude amplification
and quantum counting reuse the same phase oracle across an entire
algorithm family~\citep{brassard2002amplitude};
pipelines in the style of the Quantum Approximate Optimization
Algorithm (QAOA) likewise begin at constraint
encoding~\citep{farhi2014qaoa}. Turning a concrete problem
instance (\emph{this} SAT formula, \emph{this} graph) into a correct,
resource-bounded oracle circuit is the kind of mechanical but
unforgiving work one would like to delegate to large language
models (LLMs).

Existing evidence suggests this is where most LLMs break, and our
measurements show that even frontier models leave a diagnosable
semantic gap. On the QuantumKatas curriculum benchmark, models reproduce
textbook algorithms reliably yet collapse on constraint encoding
(34.4\% on \textsf{SolveSATWithGrover}, their weakest category),
and 43\% of failures are \emph{logical}: code that runs but
prepares the wrong state~\citep{cruzbenito2026quantumkatas}. Measuring this capability,
however, requires scoring machinery that existing benchmarks do not
provide. Sampled unit tests systematically overestimate correctness:
false-positive rates of 30--62\% have been measured on mainstream code
benchmarks~\citep{li2022alphacode}, and an $80\times$ test
amplification changes model rankings~\citep{liu2023evalplus}.
Distribution-level checks used in quantum testing are in turn provably
blind to phase errors, since $|+\rangle$ and $|-\rangle$ have identical
output distributions~\citep{paltenghi2024survey}. An oracle whose phases are
wrong on a single assignment silently breaks the interference pattern
Grover depends on; in our own measurements, basis-state test suites
certify up to 27.7\% of a model whose true rate is zero.

Oracle encoding has a redeeming property: it is one of the rare
generation tasks where \emph{complete} semantic scoring is feasible.
The target semantics is a finite Boolean function, the artifact a
unitary, and equivalence up to global phase is decidable by
simulation at benchmark scale. \qeb{} exploits this. Our contributions:

\begin{enumerate}
\item \textbf{A benchmark with a self-validated, complete verifier.}
  660 generator-frozen instances, a 480-instance graded core plus
  a 180-instance escalation set, across seven constraint families
  and five difficulty tiers, with isolated seed domains for
  contamination control. Candidates are scored by full solution-set
  equivalence over all $2^n$ assignments plus resource gating, and the verifier validates itself through adversarial mutants,
  cross-method agreement, and live spot checks.
\item \textbf{A measured capability map and a method comparison.}
  Across thirteen model configurations, direct generation ranges
  from zero to 83.1\% semantic pass@1, with the reasoning mode as
  the dominant factor and failures concentrated at the semantic
  level.
  Two architectures, a \emph{unit-verified constraint agent} and a
  \emph{neuro-symbolic compilation} pipeline, reach 93.1\% and
  92.1\% by delegating composition to deterministic code; ablations
  price each component.
\item \textbf{A stress-response matrix.}
  Three structural difficulty axes press on different layers:
  hybrid constraints degrade the unit agent, zero-slack ancilla
  budgets degrade direct generation, scale degrades neither
  scaffold, and neuro-symbolic compilation passes every
  evaluated instance on all three, because the pressured layer is
  exactly the one it delegates to a compiler.
\end{enumerate}

\begin{figure}[t]
\centering
\includegraphics[width=\columnwidth]{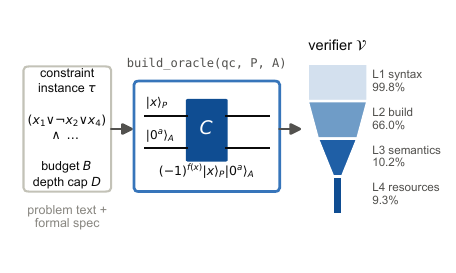}
\caption{The \qeb{} task: encode a constraint instance as a phase
oracle via \texttt{build\_oracle} within qubit budget $B$ and depth
cap $D$, scored by the four-gate verifier.}
\label{fig:task}
\end{figure}

\section{Related Work}
\label{sec:related}

\paragraph{LLMs for quantum code.}
Qiskit HumanEval scores executable Qiskit with hand-written unit
tests, and Qiskit Code Assistant documents the industrial
demand~\citep{vishwakarma2024qiskithumaneval,dupuis2024qiskitassistant}. QCircuitBench covers
quantum algorithm design at scale but scores its oracle suite with
end-to-end test cases and supplies multi-controlled gates to the
model as library files~\citep{yang2025qcircuitbench}. QuantumKatas ports a
fixed curriculum and finds problem encoding
weakest~\citep{cruzbenito2026quantumkatas}, QuanBench scores
process fidelity against one canonical reference circuit,
penalizing different ancilla
layouts~\citep{guo2025quanbench}, and QuanBench+ aligns tasks
across frameworks~\citep{slim2026quanbenchplus}. Threads on quantum
knowledge~\citep{afane2026quantumaudit}, LLMs as circuit
simulators~\citep{wang2025grovergpt}, and circuits as training
data~\citep{apak2024ketgpt} are orthogonal. None of these combines
specification-level complete scoring, resource gating, verifier
self-validation, controlled difficulty scaling, and a systematic
method comparison; that conjunction is what \qeb{} contributes.
Quantum verification tooling, namely equivalence
checking~\citep{burgholzer2020qcec,peham2022zx}, formal
verification~\citep{lewis2022formal}, black-box oracle
testing~\citep{long2025oracle}, and mutation
analysis~\citep{usandizaga2023mutants}, is contrasted with our
verifier's design decisions in \suppref{app:verifier}.

\paragraph{Scoring adequacy and contamination.}
Test-based scoring has measured false-positive rates of 30\% on
HumanEval and 60\% on APPS~\citep{li2022alphacode}, and amplifying
HumanEval's tests $80\times$ drops pass@k by up to 28.9\% and
reorders model rankings~\citep{liu2023evalplus}; SWE-bench's
authors flag execution-test scoring as a
limitation~\citep{jimenez2024swebench}. EvalPlus names the ideal,
formally verifying correctness for arbitrary inputs, and calls it
infeasible in general~\citep{liu2023evalplus}; in our restricted
domain it is feasible, and we implement it. On contamination,
LiveCodeBench documents post-cutoff performance
cliffs~\citep{jain2024livecodebench}, and the dynamic-benchmarking
survey of \citet{chen2025contamination} rates generator-based
designs strongest. \qeb{} is generator-based with three isolated
seed domains, and its frozen-vs-fresh control answers the survey's
central confound, contamination versus complexity.

\paragraph{Method lineage.}
Self-Debugging, Reflexion, and Self-Refine established the
feedback-repair paradigm and, jointly, that gains track the quality
of the feedback signal~\citep{chen2024selfdebug,shinn2023reflexion,
madaan2023selfrefine}. In our domain the model-generated test
signal collapses, since a model that cannot write the oracle
cannot write its checker either (cf.\ the LLM-generated tests of
QAgent,~\citealp{fu2025qagent}, and the ``toxic tests'' of
CodeT,~\citealp{chen2023codet}); we therefore study repair driven
by a trusted verifier and price that signal by ablation. On the
neuro-symbolic side, PAL offloads execution to an
interpreter~\citep{gao2023pal}; Logic-LM adds solver-error feedback
but stagnates because solver errors certify well-formedness, not
correct modeling~\citep{pan2023logiclm}; SatLM argues for
declarative specifications~\citep{ye2023satlm}; LINC, unable to
verify its parses, falls back to majority
voting~\citep{olausson2023linc}. Our specification is
\emph{executable}, so parses are verified exhaustively and repaired
with semantic counterexamples, and our compiler stands on classical
reversible-logic
synthesis~\citep{soeken2018epfl,schmitt2022tweedledum}, whose toolchains take the natural-language-to-specification step
as given. That step is exactly what we measure.

\section{The \qeb{} Benchmark}
\label{sec:benchmark}

\subsection{Task: Phase-Oracle Encoding}
\label{sec:task}

Fix a problem instance with $n$ problem qubits and a predicate
$f:\{0,1\}^n\to\{0,1\}$ with solution set $\Sf=f^{-1}(1)$ and density
$\rho(f)=|\Sf|/2^n$. An instance is a tuple $\tau=(P_\tau, n, B, D)$:
a natural-language problem statement $P_\tau$ (containing the formal
specification), the qubit budget $B\le 20$, and a transpiled-depth cap
$D$. The model must emit Python/Qiskit code defining
\texttt{build\_oracle(qc, problem\_qubits, ancilla\_qubits)}, which
appends gates to \texttt{qc} using the $n$ problem qubits and at most
the $B-n$ supplied ancillas (Figure~\ref{fig:task}).

\begin{definition}[Correct encoding]
\label{def:correct}
Let $C$ be the unitary realized by the candidate on the $q\le B$
qubits it acts on, of which $a=q-n$ are ancillas initialized to
$|0^a\rangle$ (unused supplied ancillas stay idle and are dropped).
$C$ correctly encodes $f$ iff there exists a global phase $\theta$,
independent of $x$, with
\begin{equation}
C\,|x\rangle|0^{a}\rangle \;=\; e^{i\theta}\,(-1)^{f(x)}\,|x\rangle|0^{a}\rangle
\qquad \forall x\in\{0,1\}^n .
\label{eq:oracle}
\end{equation}
\end{definition}

Definition~\ref{def:correct} makes three demands that sampled testing
does not: the phase pattern must be right on \emph{every} assignment;
ancillas must return to $|0^a\rangle$ (residual entanglement silently
destroys downstream interference~\citep{grover1996,che2023trade}); and the circuit
must be a \emph{unitary}: mid-circuit measurement or reset
disqualifies the candidate, since amplitude amplification requires an
invertible, measurement-free operator~\citep{brassard2002amplitude}.
Global-phase equivalence is the standard
notion~\citep[Def.~3.4]{paltenghi2024survey}; in particular a circuit
marking the \emph{complement} of $\Sf$ is accepted, being $e^{i\pi}$
times Eq.~\eqref{eq:oracle}. The equivalence is exact for
uncontrolled uses (Grover search, amplitude amplification); under a
control, as in quantum counting, a global phase becomes a
\emph{relative} one, so deployment there must fix the phase
convention.

Resources are part of the task. $B$ follows a per-family formula
(Table~\ref{tab:families}) and $D$ is four times the transpiled depth
of a reference implementation under a fixed basis
(\texttt{rz,sx,x,cx}, optimization level 1). High-level black boxes (\texttt{UnitaryGate},
\texttt{PhaseOracle}, statevector synthesis) are banned: the task
is to construct the circuit, not look it up.

\subsection{Problem Families and Difficulty Tiers}
\label{sec:families}

\begin{table*}[t]
\centering
\small
\setlength{\tabcolsep}{4pt}
\resizebox{\textwidth}{!}{%
\begin{tabular}{@{}lllll@{}}
\toprule
Family & Predicate & Budget $B$ (flag tiers) & Budget $B$ (counter tiers) & Tiers \\
\midrule
3-SAT & all clauses satisfied & $n+m+1$ & $n+1+\wcw(m)$ & graded, zero-slack, scale \\
3-coloring & all edges bichromatic & $2V+E+1$ & $2V+\wcw(E)+1$ & graded \\
vertex cover & edges covered $\wedge$ $\mathrm{popcount}(x)\le k$ &
  $n+E+\wcw(n)$ & $n+1+\wcw(E)+\wcw(n)$ (zero-slack) & graded, zero-slack, scale \\
subset sum & $\sum_i v_i x_i = t$ &
  \multicolumn{2}{l}{$n+\wcw(\sum_i v_i)$ (all tiers)} & graded, scale \\
Latin square & row/column cells distinct & $2F+P+1$ & $2F+\wcw(P)+2$ & graded \\
string match & pattern occurs at an offset & $n+\mathit{offs}+1$ & $n+1+\wcw(\mathit{offs})$ & graded, zero-slack, scale \\
SAT $\times$ cardinality & SAT $\wedge$ $\mathrm{popcount}(x) \bowtie k$ &
  \multicolumn{2}{l}{$n+m+\wcw(n)$ (all tiers)} & graded \\
\bottomrule
\end{tabular}}
\caption{Problem families. $\wcw(x)=\lceil\log_2(x{+}1)\rceil$; lower
graded tiers budget one flag qubit per constraint, higher graded
tiers a shared counter. Zero-slack sets $B$ to a rolling reference's exact footprint;
scale grows $n$ toward the simulation limit.}
\label{tab:families}
\end{table*}

The seven families cover conjunctive clause logic (3-SAT), encodings with a surjective
color code requiring careful semantics (coloring, Latin squares), cardinality
constraints requiring counters (vertex cover), arithmetic (subset
sum), disjunctive structure (string matching), and a hybrid demanding
\emph{two} techniques in one circuit (SAT $\times$ cardinality).
Solution density is
filtered to $\rho\in[1/64,\,1/2]$ so that verification costs stay
uniform and marked sets are informative. Every instance carries a
reference oracle that must itself pass the full verifier before the
instance is admitted, so ground truth holds by construction,
avoiding the defective-reference problem of HumanEval, where 11\%
of canonical solutions are flawed~\citep{liu2023evalplus}.

Difficulty has two conventional axes (constraint count; tier-wise
budget tightening from per-constraint flags to shared counters) and
three \emph{structural} axes added after the strongest methods
saturated the graded tiers: zero-slack budgets, scale, and the
hybrid family. The graded tiers over the six base families make up
the 480-instance \emph{core set}, which carries all
direct-generation and method experiments, in full or via stratified
subsets; the structural axes make up the 180-instance
\emph{escalation set} (60 zero-slack, 60 scale, 60 hybrid), held back for the stress-response study of
Table~\ref{tab:escalation}.
Together they are the 660 frozen instances.

\subsection{The Four-Level Verifier}
\label{sec:verifier}

The verifier $\mathcal{V}$ evaluates each candidate through four
gates:

\textbf{L1--L2 (syntax, API, sandboxed build).} Parseability plus
a recursive ban scan for oracle-lookup APIs; then the candidate
builds on a fresh circuit in a resource-limited sandbox, with
non-unitary instructions (\texttt{reset}, measurement,
\texttt{initialize}) rejected recursively.

\textbf{L3 (functional semantics).} The core gate decides
Definition~\ref{def:correct} in two stages: a cheap rejection stage
(the \emph{fast path}), and, for every candidate the fast path
accepts, \emph{exhaustive} per-basis-state simulation, at most
$4096$ input states for $n\le12$, whose deterministic verdict is
the verdict of record. Acceptance is therefore a decision
procedure, not a statistical test. The fast path prepares the
uniform superposition (Grover's own initial state,
\citealp{grover1996}), applies $C$, and reads the full phase
pattern from one statevector: with
$|\psi\rangle = C\,(H^{\otimes n}\!\otimes\! I_a)|0^{n+a}\rangle$ and
amplitudes $\psi_x=\langle x,0^a|\psi\rangle$, it checks
(i)~\emph{leakage}: $1-\sum_x|\psi_x|^2\le\varepsilon_\ell$;
(ii)~\emph{magnitudes}: $\big||\psi_x|-2^{-n/2}\big|\le\varepsilon_m$
for all $x$; (iii)~\emph{phases}: $\arg(\psi_x/\psi_{0^n})$ matches
$\pi\,(f(x)\oplus f(0^n))$ within tolerance. Conditions (i)--(iii) are necessary and sufficient \emph{given}
that $C$ acts diagonally on the problem register; to rule out
non-diagonal impostors (basis permutations pass (i)--(iii) on the
uniform state), two \emph{fingerprint} probes with random
per-amplitude magnitudes and phases must be acted on exactly as
that diagonal (\suppref{app:verifier}). The fast path's value is efficiency: for any alignment fixed in
advance, a non-diagonal deviation of size $\eta$ escapes both
fingerprints with probability at most
$(3\cdot2^{n/2}\varepsilon_m/\eta)^2$, i.e.\ at most
$4\times10^{-8}$ for a misrouted basis state
(Proposition~\ref{prop:soundness}, \suppref{app:verifier}; the implemented anchoring is
discussed there).

\noindent The proof (\suppref{app:verifier}) is a small-ball
argument over the fingerprints' independent random phases. Its two
scope caveats, the guarantee being per fixed alignment while the
implemented check anchors the alignment on the probe, and the
assumption that errors are independent of the published fingerprint
seed, are discussed there; both concern this efficiency bound alone,
since no acceptance rests on the fast path.
Every accepted sample behind this paper carries the exhaustive
verdict, and no fast-path acceptance was overturned by the
confirmation (\suppref{app:verifier}). On failures, the verifier
extracts the \emph{marked set} $\Shat(C)$ (assignments carrying the
$\pi$ relative phase) and reports the \emph{mark accuracy}
$\alpha(C)$, the fraction of assignments consistent with the better
of the two admissible global-phase hypotheses
(\suppref{app:verifier}), together with concrete
counterexample assignments: machine-checkable feedback that
repair-based methods consume.

\textbf{L4 (resources).} Transpiled depth $\le D$ under the fixed
basis and $q\le B$.

\paragraph{Scoring convention.}
Throughout, \emph{semantic} $\passk{k}$ denotes survival of
L1--L3 (the capability this paper studies), and \emph{full}
$\passk{k}$ additionally requires L4. We report both: resource compliance is architecture-dependent, and
folding a tunable threshold ($D=4\,d_{\mathrm{ref}}$) into the
primary metric would conflate two distinct failure modes.

\paragraph{Why not off-the-shelf verification?}
Each obvious alternative, equivalence
checking~\citep{burgholzer2020qcec}, ZX-calculus
rewriting~\citep{peham2022zx}, formal
verification~\citep{lewis2022formal}, and statistical
testing~\citep{paltenghi2024survey,long2025oracle}, fails at least
one benchmark requirement; \suppref{app:verifier} gives the
case-by-case contrast. The price of completeness is the scale cap $n\le12$. The return is
scoring whose false-positive risk is zero on the exhaustive path
and quantitatively bounded on the fast path
(Proposition~\ref{prop:soundness}, \suppref{app:verifier}), with
diagnosable counterexamples on every failure.

\paragraph{Verifier self-validation.}
Trust in the scores reduces to trust in $\mathcal{V}$, so
$\mathcal{V}$ is itself under test on four fronts
(\suppref{app:verifier}): 16 adversarial wrong-by-design
mutants, all rejected, with the two basis-permutation mutants
confirming that the fingerprint probes are load-bearing; three-way
agreement of fast path, exhaustive simulation, and an independent
unitary-level check on 500 sampled verdicts; live spot checks that
cross-check hash-selected candidates against exhaustive simulation
in every run and freeze scoring on any disagreement; and a
global-phase regression suite.

\subsection{Frozen Sets and Contamination Control}
\label{sec:frozen}

Instances are drawn from parametric generators with deterministic
seeds; the core set, the fresh regeneration sets, and the escalation
set draw from three disjoint seed domains. In the taxonomy of
\citet{chen2025contamination}, \qeb{} is a rule-based dynamic
benchmark, the class they rate strongest, satisfying their criteria by construction, from verified
references and generators to seed isolation and tiered
parameters. To separate
contamination from difficulty, their central confound, we compare
the frozen set against a \emph{freshly regenerated} 126-instance set
from the untouched seed domain: no-think 10.0\% (frozen, regenerated
run) vs.\ 9.5\% (fresh); reasoning mode 45.4\% vs.\ 52.4\%.
Bootstrap confidence intervals overlap in both cases: no frozen-set advantage is detected at this sample size, and the reasoning configuration in fact scores seven percentage
points \emph{higher} on the fresh set. 

\section{Encoding Methods}
\label{sec:methods}

\begin{figure}[t]
\centering
\includegraphics[width=\columnwidth]{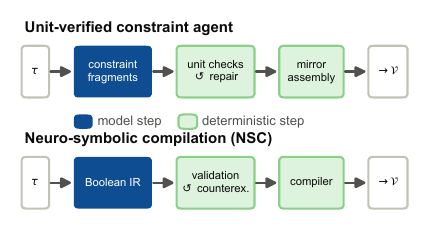}
\caption{Two ways to cut the task at its bottleneck. Both
externalize quantum-implementation correctness; they differ in
where the model stops.}
\label{fig:pipelines}
\end{figure}

All methods receive the same instance text and interface contract,
and all of their outputs pass through the same verifier
$\mathcal{V}$; Figure~\ref{fig:pipelines} contrasts the two
architectures introduced below. ``Thinking'' denotes the model's
native extended reasoning
mode~\citep{wei2022cot}; each method is evaluated with it on and
off.

\subsection{Baselines}
\label{sec:baselines}

\textbf{Direct} prompts for \texttt{build\_oracle} in one shot.
\textbf{Few-shot} adds two worked oracle examples on unrelated toy
predicates. \textbf{Self-repair} feeds back $\mathcal{V}$'s verdict,
counterexamples, and mark accuracy for up to two repair rounds (the
trusted-verifier analogue of
Self-Debugging,~\citealp{chen2024selfdebug}). The \textbf{decomposition agent} first elicits a classical
decomposition, then implements it, testing whether the bottleneck
is classical understanding.

\subsection{Unit-Verified Constraint Agent}
\label{sec:unit}

The unit agent decomposes $f$ \emph{at the quantum implementation
level}. From the instance's formal specification the harness derives
a predicate decomposition
\begin{equation}
f(x)=\textstyle\bigodot_{j=1}^{m} p_j(x),\qquad
\textstyle\bigodot\in\{\bigwedge,\bigvee\},
\label{eq:decomp}
\end{equation}
one $p_j$ per clause, edge, or pattern offset; cardinality
predicates are single units. The model writes one \emph{fragment} per predicate under the
contract
\begin{equation}
F_j\,|x\rangle_{\!P}\,|b\rangle_{\mathrm{fl}}\,|0\rangle_{\mathrm{sc}}
=|x\rangle_{\!P}\,|b\oplus p_j(x)\rangle_{\mathrm{fl}}\,|0\rangle_{\mathrm{sc}},
\label{eq:fragment}
\end{equation}
with no phase side effects. Each fragment is \emph{unit-verified} in isolation on a bench
wired differently from deployment, checking predicate agreement on
all $2^n$ inputs, phase cleanliness, scratch restoration, and a
fingerprint identity test (Eq.~\eqref{eq:fragment} demands a
specific permutation, not correct classical values alone). Failing fragments get up to two rounds of
targeted repair with per-fragment counterexamples, a unit-level form
of process supervision~\citep{lightman2023verify} made exact by the
executable specification. Verified fragments are then composed by a
\emph{deterministic mirror-assembly harness}, which computes all
flags, applies the phase on the aggregate, and uncomputes in reverse,
with three assembly plans chosen by budget, per-flag, rolling
counter, and hold-out rolling (\suppref{app:budgets}). Unlike
decomposed prompting, where the LLM also orchestrates
composition~\citep{khot2023decomposed}, assembly is code, not
generation.

\subsection{Neuro-Symbolic Compilation (NSC)}
\label{sec:nsc}

NSC moves the split one level up: the model never writes quantum
code. It emits a Boolean intermediate representation (IR), a JSON
term over seven primitives
\[
\phi ::= \mathsf{and}(\phi^+) \mid \mathsf{or}(\phi^+) \mid
\mathsf{not}(\phi) \mid \mathsf{bit}(i,v) \mid
\mathsf{cells\_differ}(a,b)
\]
\[
\qquad\;\; \mid\; \mathsf{cell\_ne\_const}(a,v) \mid
\mathsf{linsum\_cmp}(\textstyle\sum_i c_i x_i \bowtie r)
\]
with $\bowtie\in\{=,\ne,\le,\ge\}$. A classical interpreter evaluates
$\phi$ on all $2^n$ assignments and compares against the instance's solution set; mismatches return up to two rounds of concrete
counterexamples: \emph{semantic} feedback, in contrast to the
solver-error feedback whose gains stagnate at well-formedness in
Logic-LM~\citep{pan2023logiclm}. The validated IR is compiled deterministically by one of five
plans (\suppref{app:budgets}), specialized to tight ancilla budgets
and auditability. The compiled
circuit still passes through the full verifier. Models cannot predict program execution~\citep{austin2021mbpp},
and simulating a quantum circuit mentally is strictly harder; NSC
removes every step that requires it. 

\begin{remark}[What NSC does and does not certify]
\label{rem:nsc}
Because instance texts embed the formal specification, a
hand-written parser feeding our compiler would solve the benchmark
deterministically. NSC's score therefore measures the robustness of
the LLM's natural-language-to-IR translation plus the toolchain, an upper reference architecture, while the quantum-implementation
methods remain the benchmark's primary object.
\end{remark}

\section{Experiments}
\label{sec:experiments}

\subsection{Setup}
\label{sec:setup}

\textbf{Models.} Three groups. \emph{Open-weight} (direct
generation only):
Qwen2.5-Coder-7B/32B, Qwen3-32B with reasoning off and on,
DeepSeek-R1-0528-Qwen3-8B, Qwen3-235B-A22B-Thinking, and
gpt-oss-20b/120b at high reasoning effort; one attempt per
instance, non-reasoning models greedy, reasoning models at each model
card's recommended sampling (serving and token-budget protocol in
\suppref{app:results}). \emph{Primary evaluation model},
carrying all agentic methods: DeepSeek-V4-Flash with and without
native reasoning (``thinking''), sampled at $t{=}0.7$ with five
attempts on the core set and three on a stratified 126-instance
subset whose 54-instance core carries the ablations;
Qwen2.5-Coder-7B joins as a weak baseline. \emph{Closed-source models} ($^{\dagger}$/$^{\ddagger}$): Claude
Opus 4.8, Claude Haiku 4.5, and GPT-5.6 Terra
(Table~\ref{tab:main}; \suppref{app:harness}).
\textbf{Metrics.} Unbiased $\passk{k}$~\citep{chen2021codex} with
bootstrap confidence intervals; on the 54-task subsets,
differences below roughly ten percentage points are reported as
suggestive rather than established. Because every verifier acceptance is decided by exhaustive
simulation (Proposition~\ref{prop:soundness}, \suppref{app:verifier}, bounds only how rarely
the fast path wastes a confirmation), $\passk{k}$ doubles as a
deployment number: it is
the realized success rate of verifier-filtered best-of-$k$
selection, the reliable end of a spectrum whose unreliable ends
are model-generated tests~\citep{chen2023codet} and majority
voting~\citep{wang2023selfconsistency}; process-verifier selection
already beats voting~\citep{lightman2023verify}, and ours selects with an exact verifier, which also scores every
row in the paper.

\subsection{Direct-Generation Capability}
\label{sec:e1}

\begin{table*}[t]
\centering
\small
\setlength{\tabcolsep}{4pt}
\begin{tabular}{@{}lcccccccccccccc@{}}
\toprule
 & \multicolumn{4}{c}{Verification funnel (survival, \%)} &
   \multicolumn{6}{c}{By family (L3)} &
   \multicolumn{3}{c}{By graded tier (L3)} \\
\cmidrule(lr){2-5}\cmidrule(lr){6-11}\cmidrule(lr){12-14}
Model / mode & L1 & L2 & L3 & L4 & SAT & Col. & VC & Sum & Latin & Match & 1 & 2 & 3 \\
\midrule
DeepSeek-V4-Flash ($t{=}0.7$) & 99.8 & 66.0 & 10.2 & 9.3 & 8.0 & 2.4 & 1.1 & 9.7 & 19.1 & 24.9 & 15.0 & 10.5 & 4.8 \\
\quad + thinking (greedy) & 85.6 & 73.3 & 45.4 & 40.8 & 51.0 & 25.0 & 12.9 & 62.9 & 65.7 & 61.4 & 59.1 & 51.3 & 25.3 \\
\midrule
Qwen2.5-Coder-7B & 100.0 & 46.9 & 0.0 & 0.0 & 0.0 & 0.0 & 0.0 & 0.0 & 0.0 & 0.0 & 0.0 & 0.0 & 0.0 \\
Qwen2.5-Coder-32B & 100.0 & 43.1 & 0.2 & 0.2 & 0.0 & 0.0 & 1.4 & 0.0 & 0.0 & 0.0 & 0.6 & 0.0 & 0.0 \\
DeepSeek-R1-0528-Qwen3-8B & 86.0 & 16.7 & 1.0 & 0.8 & 0.0 & 0.0 & 0.0 & 1.4 & 1.4 & 4.3 & 3.0 & 0.0 & 0.0 \\
Qwen3-32B & 99.8 & 44.6 & 1.9 & 1.9 & 1.0 & 0.0 & 1.4 & 0.0 & 0.0 & 10.0 & 4.3 & 1.3 & 0.0 \\
\quad + thinking & 99.8 & 58.3 & 20.4 & 19.8 & 7.0 & 0.0 & 2.9 & 70.0 & 37.1 & 20.0 & 32.9 & 19.6 & 8.2 \\
Qwen3-235B-A22B-Thinking & 99.8 & 56.9 & 34.2 & 31.9 & 33.0 & 11.0 & 20.0 & 65.7 & 38.6 & 47.1 & 49.4 & 34.8 & 17.7 \\
gpt-oss-20b (high) & 90.0 & 74.4 & 68.8 & 60.4 & 75.0 & 43.0 & 58.6 & 85.7 & 82.9 & 75.7 & 82.3 & 74.1 & 49.4 \\
gpt-oss-120b (high) & 97.7 & 94.2 & 81.2 & 69.0 & 92.0 & 60.0 & 70.0 & 90.0 & 90.0 & 90.0 & 90.9 & 87.3 & 65.2 \\
\midrule
Claude Haiku 4.5$^{\dagger}$ & 100.0 & 79.4 & 49.0 & 45.2 & 46.0 & 17.0 & 38.6 & 57.1 & 81.4 & 68.6 & 65.9 & 50.0 & 30.4 \\
Claude Opus 4.8$^{\dagger}$ & 99.8 & 90.2 & 65.2 & 59.0 & 61.0 & 49.0 & 67.1 & 58.6 & 77.1 & 87.1 & 72.6 & 64.6 & 58.2 \\
GPT-5.6 Terra$^{\ddagger}$ & 99.4 & 95.8 & 83.1 & 80.8 & 82.0 & 79.0 & 78.6 & 87.1 & 85.7 & 88.6 & 82.3 & 88.0 & 79.1 \\
\bottomrule
\end{tabular}
\caption{Direct generation on the 480-instance core set. The funnel
columns give cumulative survival through the four verifier gates:
L3 is the semantic $\passk{1}$ used as the
primary metric throughout, L4 additionally enforces resource
budgets. Families: 3-SAT, 3-coloring,
vertex cover, subset sum, Latin square, string matching (hybrid
is escalation-only, Table~\ref{tab:escalation}). Blocks, top to bottom: the primary evaluation model (five samples
at $t{=}0.7$ in the sampled row), open-weight models (reasoning
rows sample once at model-card settings), and closed-source models
($^{\dagger}$/$^{\ddagger}$, one attempt each); all remaining rows
decode greedily once. $^{\dagger}$/$^{\ddagger}$Run under non-standard protocols; see
\suppref{app:harness}.}
\label{tab:main}
\end{table*}

Table~\ref{tab:main} maps a wide capability range. At the floor, non-reasoning coders solve essentially nothing
regardless of scale: the 7B and 32B coders land at 0.0 and 0.2\%
while writing syntactically near-perfect Qiskit and building
roughly half the time. The failure is semantic, not syntactic. Reasoning is the first-order
variable: on \emph{identical} Qwen3-32B weights, switching to the
reasoning configuration lifts L3 from 1.9\% to 20.4\%, and native
reasoning quadruples DeepSeek-V4-Flash from 10.2\% (32.7\% across
five verified attempts) to 45.4\%. The contrast survives matched decoding, rising from
10.8\% to 48.9\% under identical $t{=}0.7$ sampling
(Table~\ref{tab:ladder}). Scale then compounds it, 20.4 to 34.2 from Qwen3-32B to 235B, and
the strongest open models, gpt-oss-20b and 120b, reach 68.8 and
81.2\%.
The per-family columns rank difficulty
essentially identically across the DeepSeek and open-weight rows:
vertex cover and coloring are hardest, at 70.0 and 60.0 even for
gpt-oss-120b, and these are exactly the two families whose
encodings hinge on dedicated techniques, a cardinality counter and
the surjective color code, while string
matching, Latin squares, and subset sum lead. The Claude rows land
between the Qwen and gpt-oss blocks at 65.2 and 49.0, keep coloring hardest, and deviate only mildly from the family
pattern. The GPT-5.6 Terra row tops the table at 83.1 semantic and keeps
the same family ordering, with coloring and vertex cover lowest
(\suppref{app:harness}). Per-tier
rates fall
monotonically in every configuration except that proxy row
(e.g.\ 90.9$\to$65.2 for
gpt-oss-120b, 59.1$\to$25.3 for DeepSeek-V4-Flash thinking); the
graded tiers grade across the full capability range. The funnel columns locate each row's difficulty. Coder models fail
at the build and semantics gates, DeepSeek-V4-Flash loses 56 of its
66 surviving points at L3, and the gpt-oss models leave the largest
mass at the resource gate: 8.4 and 12.2 points of transpiled-depth
overruns (\suppref{app:results}).
Failures still
concentrate at exactly the level
sampled tests measure poorly. The
frozen-vs-fresh control shows no memorization advantage.

\subsection{How Much Do Sampled Tests Overestimate?}
\label{sec:scoring}

Re-scoring every direct-generation sample under two degraded
regimes, random basis-state input--output checks (the analogue of
prior benchmarks' end-to-end
suites~\citep{yang2025qcircuitbench}) and four random
product-state probes, quantifies what complete scoring adds
(protocol and per-row table in \suppref{app:results}).

Basis-state suites certify 27.7\% of a model whose true semantic
rate is \emph{zero} (all 133 Qwen circuits they accept are wrong),
and inflate the commodity rows by 25 and 21 points, with 71\% and
31\%
of accepted circuits wrong. Quadrupling the suite from 16 to 64
tests changes nothing: a phase oracle acts as
the identity on every basis state~\citep{paltenghi2024survey}, so
the blindness is structural, unlike the classical setting where test
amplification recovers much of the gap~\citep{liu2023evalplus}. Four
random superposition probes already cut the false-positive rate to
at most 1.2\%; complete verification closes the last gap exactly.
Rankings across the three rows happen to survive every regime (the
gaps are large), but the absolute scores that benchmarks
report do not.

\subsection{Method Comparison}
\label{sec:ladder}

\begin{table}[t]
\centering
\small
\setlength{\tabcolsep}{3.5pt}
\begin{tabular}{@{}lcccc@{}}
\toprule
 & \multicolumn{2}{c}{$\passk{1}$} & \multicolumn{2}{c}{$\passk{3}$} \\
\cmidrule(lr){2-3}\cmidrule(lr){4-5}
Method & L3 & L4 & L3 & L4 \\
\midrule
Direct & 10.8 & 9.8 & 23.7 & 22.2 \\
\quad + thinking & 48.9 & 45.0 & 73.0 & 64.3 \\
\addlinespace
Few-shot & 10.0 & 9.7 & 20.9 & 19.9 \\
\quad + thinking & 52.6 & 49.2 & 77.8 & 68.3 \\
\addlinespace
Self-repair & 27.8 & 27.0
  & \multicolumn{2}{c}{\multirow{2}{*}{($k{=}1$)}} \\
\quad + thinking & 59.5 & 59.5 & \multicolumn{2}{c}{} \\
\addlinespace
Decomposition & 27.8 & 27.8
  & \multicolumn{2}{c}{\multirow{2}{*}{($k{=}1$)}} \\
\quad + thinking & 59.5 & 58.7 & \multicolumn{2}{c}{} \\
\addlinespace
Unit agent & 36.0 & 34.7 & 51.6 & 48.4 \\
\quad + thinking & \textbf{93.1} & 80.4 & \textbf{100.0} & 91.3 \\
\addlinespace
NSC$^{\ast}$ & 92.1 & \textbf{91.8} & 97.6 & \textbf{97.6} \\
\bottomrule
\end{tabular}
\caption{Method comparison (DeepSeek-V4-Flash, stratified 126-task
subset, seven per family$\times$tier cell, \%); ``+ thinking'' switches on the model's reasoning mode, as in
Table~\ref{tab:main}, with L3/L4 as there. $\passk{k}$ is the unbiased estimate from the available samples
($k{=}5$ for the non-thinking direct and few-shot rows, $k{=}3$
elsewhere); ($k{=}1$) marks the single-attempt rows.
$^{\ast}$NSC has no thinking variant by design: it asks the model only
for IR translation. Bold marks the best value
in each column.}
\label{tab:ladder}
\end{table}

Table~\ref{tab:ladder} contains the paper's second set of results.
The non-thinking rows preserve the same method ordering at a
lower level, so the architecture effects are not an artifact of the
reasoning mode; NSC is the exception in kind, reaching 92.1\%
\emph{on the non-reasoning model}, since the only step it asks of
the model is IR translation. Three further observations. First, \emph{architecture dominates additional sampling at
model-token cost}:
two further samples lift direct generation by 24 points, from 48.9
to 73.0, while a single unit-agent attempt is worth 44, and its
$\passk{3}$ then solves the subset entirely. Figure~\ref{fig:frontier} lines the methods up with their
sampling trajectories and per-attempt token cost. Second,
\emph{where you cut matters more than whether you cut}: the
decomposition agent, which decomposes the \emph{classical}
understanding, lands exactly on
plain self-repair (59.5 vs 59.5); the unit
agent, which decomposes the \emph{quantum implementation}, gains some 34 points over both. Classical understanding is not the bottleneck:
NSC's no-validation ablation already reaches 81.5\% end-to-end
(Table~\ref{tab:ablation}), so first-attempt IR translations are largely correct. This sharpens, for our domain, the task-dependence of
least-to-most prompting~\citep{zhou2023leasttomost}.
Third, \emph{resource compliance separates the winners}: at the L4
gate, NSC holds 91.8\% while the unit agent drops to 80.4\%,
because its rolling assembly buys ancilla thrift at a depth premium (\suppref{app:results}).
Architectures trade circuit width for depth; the resource gate
makes the trade visible. The separation is threshold-robust: NSC stays ahead of the unit
agent at every cap we sweep, by 27.7 points at the tightest and 2.4
at the loosest, so the ordering is stable while the magnitude
belongs to the cap (\suppref{app:results}).

On Qwen2.5-Coder-7B, self-repair and the decomposition agent both
score 0/54: this class
of scaffolding has a code-reliability floor below which it cannot
help.

\begin{figure}[t]
\centering
\includegraphics[width=\columnwidth]{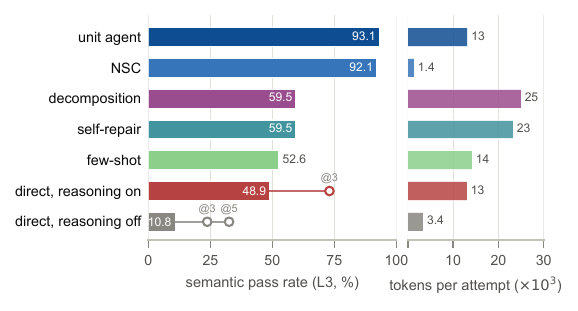}
\caption{Buying samples vs.\ buying architecture (126-task
stratified subset). Bars, rows aligned: each method's $\passk{1}$ (left) and its
median tokens per attempt (right). Hollow circles: the sampling
trajectories, unbiased $\passk{3}$ and $\passk{5}$. Attempt cost
sums all model calls; verifier and compiler compute are not
charged.}
\label{fig:frontier}
\end{figure}

\subsection{Component Ablations}
\label{sec:ablation}

\begin{table}[t]
\centering
\small
\setlength{\tabcolsep}{5pt}
\begin{tabular}{@{}llcc@{}}
\toprule
Component & Ablation & $\passk{1}$ & $\passk{3}$ \\
\midrule
\multirow{3}{*}{Repair rounds}
 & 0 rounds & 61.1 & 88.9 \\
 & 1 round & 80.2 & 98.1 \\
 & 2 (full) & 91.4 & 100.0 \\
\addlinespace
Feedback detail & which-only & 85.8 & 98.1 \\
Mirror assembly & self-assembly & 25.3 & 35.2 \\
\multirow{2}{*}{NSC IR validation}
 & full (main) & 91.4 & 96.3 \\
 & off & 81.5 & 83.3 \\
\bottomrule
\end{tabular}
\caption{Unit-agent and NSC ablations (54 tasks, L3 gate, \%;
unit-agent rows use the reasoning mode, NSC rows the
non-reasoning model).
``Which-only'' reports only which fragment failed;
``self-assembly'' lets the model compose its own verified fragments.}
\label{tab:ablation}
\end{table}

Table~\ref{tab:ablation} prices each ingredient. \textbf{Repair dose} is monotone and unsaturated, rising 61.1 to 80.2 to 91.4 over two repair rounds.
\textbf{Feedback granularity}: telling the model only \emph{which} fragment failed already
recovers most of the gain, and concrete counterexamples add a
suggestive 5.6 points more. Localization is the payload. \textbf{Deterministic assembly is the single largest
component}: letting the model assemble its own verified fragments
collapses pass@1 to 25.3\%, \emph{below the 48.1\% direct
baseline on the same 54-task subset}.
Verified parts composed by unverified generation are a net loss, the
sharpest evidence for taking orchestration away from the
model~\citep{khot2023decomposed}. \textbf{Specification supervision} is worth 9.9 points: without interpreter validation NSC still reaches 81.5\%,
bracketing deployment without a machine-checkable specification
between 81.5 and 91.4\%. The
feedback uses only information mechanically derivable from the public
instance text, but it must be priced, and we price it. The broader pattern,
external verified signals beating internal self-feedback,
replicates in a domain where the model cannot even run its own
output~\citep{madaan2023selfrefine,shinn2023reflexion}.

\subsection{Difficulty Escalation and the Stress-Response Matrix}
\label{sec:escalation}

After the strongest methods saturated the graded tiers, we
escalated difficulty \emph{structurally} on the escalation set,
never by relaxing verification.

\begin{table}[t]
\centering
\small
\setlength{\tabcolsep}{5pt}
\begin{tabular}{@{}llcccc@{}}
\toprule
Method & & Overall & Zero-slack & Scale & Hybrid \\
\midrule
\multirow{2}{*}{Direct} & $\passk{1}$ & 10.5 & 9.3 & 20.4 & 1.9 \\
 & $\passk{3}$ & 24.1 & 22.2 & 44.4 & 5.6 \\
\addlinespace
\multirow{2}{*}{Unit agent} & $\passk{1}$ & 64.2 & 79.6 & 79.6 & 33.3 \\
 & $\passk{3}$ & 83.3 & 100 & 88.9 & 61.1 \\
\addlinespace
\multirow{2}{*}{NSC} & $\passk{1}$ & 100 & 100 & 100 & 100 \\
 & $\passk{3}$ & 100 & 100 & 100 & 100 \\
\bottomrule
\end{tabular}
\caption{Escalation results (L3, \%; stratified 54-task escalation
subset, 18 per axis, $\times3$; thinking for direct and unit,
non-reasoning NSC; L4 within two points). Cells are task-level binomials
at best-of-3; intervals Clopper--Pearson, contrasts Fisher exact.}
\label{tab:escalation}
\end{table}

Table~\ref{tab:escalation}
reads as a matrix rather than a ranking: hybrid structure is the
unit agent's true weak point. $\passk{3}$ drops from 100\% to 61.1\%, 11 of 18 tasks against 34 of 36 elsewhere (Fisher exact
$p{=}0.004$). Zero-slack budgets sharply degrade direct generation, to 9.3\%,
yet leave the unit agent untouched at 100\%, four of 18 tasks
against 18 of 18 ($p{=}1.6{\times}10^{-6}$), because its assembly
harness absorbs that planning burden.
Scale alone degrades neither scaffolded method. And no NSC failure is observed anywhere: 54 of 54 tasks pass, with exact 95\% lower bounds 81.5\% per axis, 93.4\% pooled. Every axis presses on the quantum
implementation layer, and that is the layer NSC delegates to its compiler (Remark~\ref{rem:nsc}); pressuring
semantic \emph{understanding} instead requires obfuscated or
implicit specifications, a natural future escalation axis.

A ceiling absorbed by a symbolic layer is a feature with a warning
label: benchmarks should state which layer their scores certify;
ours do.

\section{Limitations}
\label{sec:limitations}

Simulation-complete scoring caps $n$ at 12 and total qubits at 20,
so conclusions concern encoding capability, not asymptotic scale.
Strong reasoning models compress the graded-tier headroom,
answered by the re-derivable escalation axes. Method comparison,
ablations, and escalation all run on one commodity model family,
and all circuits are Qiskit, with multi-framework transfer
untested~\citep{slim2026quanbenchplus}.

\section{Conclusion}
\label{sec:conclusion}

\qeb{} turns a standing assumption of quantum algorithmics, that a correct phase oracle is simply given, into a
measured and completely verified LLM capability. Three findings organize the results: reasoning, not scale, first moves direct generation; samples help far less than architecture; and delegating correctness-critical
composition to deterministic machinery closes the gap at the
system level. The core-plus-escalation design, graded tiers that map capability and structural axes that
separate what each architecture is good at,
may be a useful pattern as scaffolded systems overtake raw models. Instances, generators, verifier, and scaffolds are publicly released.

\bibliography{references}

@article{vishwakarma2024qiskithumaneval,
  title   = {Qiskit {HumanEval}: An Evaluation Benchmark for Quantum Code Generative Models},
  author  = {Vishwakarma, Sanjay and Harkins, Francis and Golecha, Siddharth and Bajpe, Vishal Sharathchandra and Dupuis, Nicolas and Buratti, Luca and Kremer, David and Faro, Ismael and Puri, Ruchir and Cruz-Benito, Juan},
  journal = {arXiv preprint arXiv:2406.14712},
  year    = {2024}
}

@article{dupuis2024qiskitassistant,
  title   = {Qiskit Code Assistant: Training {LLMs} for Generating Quantum Computing Code},
  author  = {Dupuis, Nicolas and Buratti, Luca and Vishwakarma, Sanjay and Forrat, Aitana Viudes and Kremer, David and Faro, Ismael and Puri, Ruchir and Cruz-Benito, Juan},
  journal = {arXiv preprint arXiv:2405.19495},
  year    = {2024}
}

@inproceedings{yang2025qcircuitbench,
  title     = {{QCircuitBench}: A Large-Scale Dataset for Benchmarking Quantum Algorithm Design},
  author    = {Yang, Rui and Wang, Ziruo and Gu, Yuntian and Chen, Tianyi and Liang, Yitao and Li, Tongyang},
  booktitle = {Advances in Neural Information Processing Systems (NeurIPS)},
  year      = {2025},
  note      = {arXiv:2410.07961}
}

@inproceedings{guo2025quanbench,
  title     = {{QuanBench}: Benchmarking Quantum Code Generation with Large Language Models},
  author    = {Guo, Xiaoyu and Wang, Minggu and Zhao, Jianjun},
  booktitle = {Proceedings of the 40th IEEE/ACM International Conference on Automated Software Engineering (ASE)},
  year      = {2025},
  note      = {arXiv:2510.16779}
}

@inproceedings{slim2026quanbenchplus,
  title     = {{QuanBench+}: A Unified Multi-Framework Benchmark for {LLM}-Based Quantum Code Generation},
  author    = {Slim, Ali and Hamieh, Haydar and Kotaich, Jawad and Ghosn, Yehya and Chehimi, Mahdi and Mohanna, Ammar and Hammoud, Hasan Abed Al Kader and Ghanem, Bernard},
  booktitle = {ICLR Workshop},
  year      = {2026},
  note      = {arXiv:2604.08570}
}

@article{cruzbenito2026quantumkatas,
  title   = {Qiskit {QuantumKatas}: Adapting {Microsoft}'s Quantum Computing Exercises for {LLM} Evaluation},
  author  = {Cruz-Benito, Juan and Faro, Ismael},
  journal = {arXiv preprint arXiv:2605.27210},
  year    = {2026}
}

@article{afane2026quantumaudit,
  title   = {Quantum-Audit: Evaluating the Reasoning Limits of {LLMs} on Quantum Computing},
  author  = {Afane, Mohamed and Laufer, Kayla and Wei, Wenqi and Mao, Ying and Farooq, Junaid and Wang, Ying and Chen, Juntao},
  journal = {arXiv preprint arXiv:2602.10092},
  year    = {2026}
}

@article{fu2025qagent,
  title   = {{QAgent}: An {LLM}-based Multi-Agent System for Autonomous {OpenQASM} Programming},
  author  = {Fu, Zhenxiao and Jiang, Lei and Xu, Yilun and Huang, Gang and Chen, Fan},
  journal = {arXiv preprint arXiv:2508.20134},
  year    = {2025}
}

@article{wang2025grovergpt,
  title   = {{GroverGPT}: A Large Language Model with 8 Billion Parameters for Quantum Searching},
  author  = {Wang, Haoran and Li, Pingzhi and Chen, Min and Cheng, Jinglei and Liu, Junyu and Chen, Tianlong},
  journal = {arXiv preprint arXiv:2501.00135},
  year    = {2025}
}

@inproceedings{apak2024ketgpt,
  title     = {{KetGPT}: Dataset Augmentation of Quantum Circuits Using Transformers},
  author    = {Apak, Boran and Bandic, Medina and Sarkar, Aritra and Feld, Sebastian},
  booktitle = {International Conference on Computational Science (ICCS)},
  year      = {2024},
  note      = {arXiv:2402.13352}
}

@article{chen2021codex,
  title   = {Evaluating Large Language Models Trained on Code},
  author  = {Chen, Mark and Tworek, Jerry and Jun, Heewoo and Yuan, Qiming and Pinto, Henrique Ponde de Oliveira and Kaplan, Jared and Edwards, Harri and Burda, Yuri and Joseph, Nicholas and Brockman, Greg and others},
  journal = {arXiv preprint arXiv:2107.03374},
  year    = {2021}
}

@article{austin2021mbpp,
  title   = {Program Synthesis with Large Language Models},
  author  = {Austin, Jacob and Odena, Augustus and Nye, Maxwell and Bosma, Maarten and Michalewski, Henryk and Dohan, David and Jiang, Ellen and Cai, Carrie and Terry, Michael and Le, Quoc and Sutton, Charles},
  journal = {arXiv preprint arXiv:2108.07732},
  year    = {2021}
}

@inproceedings{liu2023evalplus,
  title     = {Is Your Code Generated by {ChatGPT} Really Correct? Rigorous Evaluation of Large Language Models for Code Generation},
  author    = {Liu, Jiawei and Xia, Chunqiu Steven and Wang, Yuyao and Zhang, Lingming},
  booktitle = {Advances in Neural Information Processing Systems (NeurIPS)},
  year      = {2023},
  note      = {arXiv:2305.01210}
}

@article{li2022alphacode,
  title   = {Competition-Level Code Generation with {AlphaCode}},
  author  = {Li, Yujia and Choi, David and Chung, Junyoung and Kushman, Nate and Schrittwieser, Julian and Leblond, R{\'e}mi and Eccles, Tom and Keeling, James and Gimeno, Felix and Dal Lago, Agustin and others},
  journal = {Science},
  volume  = {378},
  number  = {6624},
  pages   = {1092--1097},
  year    = {2022}
}

@inproceedings{jimenez2024swebench,
  title     = {{SWE-bench}: Can Language Models Resolve Real-World {GitHub} Issues?},
  author    = {Jimenez, Carlos E. and Yang, John and Wettig, Alexander and Yao, Shunyu and Pei, Kexin and Press, Ofir and Narasimhan, Karthik},
  booktitle = {International Conference on Learning Representations (ICLR)},
  year      = {2024},
  note      = {arXiv:2310.06770}
}

@article{jain2024livecodebench,
  title   = {{LiveCodeBench}: Holistic and Contamination Free Evaluation of Large Language Models for Code},
  author  = {Jain, Naman and Han, King and Gu, Alex and Li, Wen-Ding and Yan, Fanjia and Zhang, Tianjun and Wang, Sida and Solar-Lezama, Armando and Sen, Koushik and Stoica, Ion},
  journal = {arXiv preprint arXiv:2403.07974},
  year    = {2024}
}

@article{chen2025contamination,
  title   = {Recent Advances in Large Language Model Benchmarks against Data Contamination: From Static to Dynamic Evaluation},
  author  = {Chen, Simin and Chen, Yiming and Li, Zexin and Jiang, Yifan and Wan, Zhongwei and He, Yixin and Ran, Dezhi and Gu, Tianle and Li, Haizhou and Xie, Tao and Ray, Baishakhi},
  journal = {arXiv preprint arXiv:2502.17521},
  year    = {2025}
}

@inproceedings{chen2023codet,
  title     = {{CodeT}: Code Generation with Generated Tests},
  author    = {Chen, Bei and Zhang, Fengji and Nguyen, Anh and Zan, Daoguang and Lin, Zeqi and Lou, Jian-Guang and Chen, Weizhu},
  booktitle = {International Conference on Learning Representations (ICLR)},
  year      = {2023},
  note      = {arXiv:2207.10397}
}

@article{lightman2023verify,
  title   = {Let's Verify Step by Step},
  author  = {Lightman, Hunter and Kosaraju, Vineet and Burda, Yura and Edwards, Harri and Baker, Bowen and Lee, Teddy and Leike, Jan and Schulman, John and Sutskever, Ilya and Cobbe, Karl},
  journal = {arXiv preprint arXiv:2305.20050},
  year    = {2023}
}

@inproceedings{chen2024selfdebug,
  title     = {Teaching Large Language Models to Self-Debug},
  author    = {Chen, Xinyun and Lin, Maxwell and Sch{\"a}rli, Nathanael and Zhou, Denny},
  booktitle = {International Conference on Learning Representations (ICLR)},
  year      = {2024},
  note      = {arXiv:2304.05128}
}

@inproceedings{shinn2023reflexion,
  title     = {Reflexion: Language Agents with Verbal Reinforcement Learning},
  author    = {Shinn, Noah and Cassano, Federico and Gopinath, Ashwin and Narasimhan, Karthik and Yao, Shunyu},
  booktitle = {Advances in Neural Information Processing Systems (NeurIPS)},
  year      = {2023},
  note      = {arXiv:2303.11366}
}

@inproceedings{madaan2023selfrefine,
  title     = {Self-Refine: Iterative Refinement with Self-Feedback},
  author    = {Madaan, Aman and Tandon, Niket and Gupta, Prakhar and Hallinan, Skyler and Gao, Luyu and Wiegreffe, Sarah and Alon, Uri and Dziri, Nouha and Prabhumoye, Shrimai and Yang, Yiming and others},
  booktitle = {Advances in Neural Information Processing Systems (NeurIPS)},
  year      = {2023},
  note      = {arXiv:2303.17651}
}

@inproceedings{gao2023pal,
  title     = {{PAL}: Program-Aided Language Models},
  author    = {Gao, Luyu and Madaan, Aman and Zhou, Shuyan and Alon, Uri and Liu, Pengfei and Yang, Yiming and Callan, Jamie and Neubig, Graham},
  booktitle = {International Conference on Machine Learning (ICML)},
  year      = {2023},
  note      = {arXiv:2211.10435}
}

@inproceedings{pan2023logiclm,
  title     = {{Logic-LM}: Empowering Large Language Models with Symbolic Solvers for Faithful Logical Reasoning},
  author    = {Pan, Liangming and Albalak, Alon and Wang, Xinyi and Wang, William Yang},
  booktitle = {Findings of the Association for Computational Linguistics: EMNLP},
  year      = {2023},
  note      = {arXiv:2305.12295}
}

@inproceedings{ye2023satlm,
  title     = {{SatLM}: Satisfiability-Aided Language Models Using Declarative Prompting},
  author    = {Ye, Xi and Chen, Qiaochu and Dillig, Isil and Durrett, Greg},
  booktitle = {Advances in Neural Information Processing Systems (NeurIPS)},
  year      = {2023},
  note      = {arXiv:2305.09656}
}

@inproceedings{olausson2023linc,
  title     = {{LINC}: A Neurosymbolic Approach for Logical Reasoning by Combining Language Models with First-Order Logic Provers},
  author    = {Olausson, Theo X. and Gu, Alex and Lipkin, Benjamin and Zhang, Cedegao E. and Solar-Lezama, Armando and Tenenbaum, Joshua B. and Levy, Roger},
  booktitle = {Empirical Methods in Natural Language Processing (EMNLP)},
  year      = {2023},
  note      = {arXiv:2310.15164}
}

@article{soeken2018epfl,
  title   = {The {EPFL} Logic Synthesis Libraries},
  author  = {Soeken, Mathias and Riener, Heinz and Haaswijk, Winston and Testa, Eleonora and Schmitt, Bruno and Meuli, Giulia and Mozafari, Fereshte and Micheli, Giovanni De},
  journal = {arXiv preprint arXiv:1805.05121},
  year    = {2018}
}

@inproceedings{schmitt2022tweedledum,
  title     = {tweedledum: A Compiler Companion for Quantum Computing},
  author    = {Schmitt, Bruno and Micheli, Giovanni De},
  booktitle = {Design, Automation \& Test in Europe Conference (DATE)},
  year      = {2022}
}

@inproceedings{wei2022cot,
  title     = {Chain-of-Thought Prompting Elicits Reasoning in Large Language Models},
  author    = {Wei, Jason and Wang, Xuezhi and Schuurmans, Dale and Bosma, Maarten and Ichter, Brian and Xia, Fei and Chi, Ed and Le, Quoc V. and Zhou, Denny},
  booktitle = {Advances in Neural Information Processing Systems (NeurIPS)},
  year      = {2022},
  note      = {arXiv:2201.11903}
}

@inproceedings{wang2023selfconsistency,
  title     = {Self-Consistency Improves Chain of Thought Reasoning in Language Models},
  author    = {Wang, Xuezhi and Wei, Jason and Schuurmans, Dale and Le, Quoc and Chi, Ed and Narang, Sharan and Chowdhery, Aakanksha and Zhou, Denny},
  booktitle = {International Conference on Learning Representations (ICLR)},
  year      = {2023},
  note      = {arXiv:2203.11171}
}

@inproceedings{zhou2023leasttomost,
  title     = {Least-to-Most Prompting Enables Complex Reasoning in Large Language Models},
  author    = {Zhou, Denny and Sch{\"a}rli, Nathanael and Hou, Le and Wei, Jason and Scales, Nathan and Wang, Xuezhi and Schuurmans, Dale and Cui, Claire and Bousquet, Olivier and Le, Quoc and Chi, Ed},
  booktitle = {International Conference on Learning Representations (ICLR)},
  year      = {2023},
  note      = {arXiv:2205.10625}
}

@inproceedings{khot2023decomposed,
  title     = {Decomposed Prompting: A Modular Approach for Solving Complex Tasks},
  author    = {Khot, Tushar and Trivedi, Harsh and Finlayson, Matthew and Fu, Yao and Richardson, Kyle and Clark, Peter and Sabharwal, Ashish},
  booktitle = {International Conference on Learning Representations (ICLR)},
  year      = {2023},
  note      = {arXiv:2210.02406}
}

@article{paltenghi2024survey,
  title   = {A Survey on Testing and Analysis of Quantum Software},
  author  = {Paltenghi, Matteo and Pradel, Michael},
  journal = {arXiv preprint arXiv:2410.00650},
  year    = {2024}
}

@article{burgholzer2020qcec,
  title   = {Advanced Equivalence Checking for Quantum Circuits},
  author  = {Burgholzer, Lukas and Wille, Robert},
  journal = {IEEE Transactions on Computer-Aided Design of Integrated Circuits and Systems},
  volume  = {40},
  number  = {9},
  pages   = {1810--1824},
  year    = {2021},
  note    = {arXiv:2004.08420}
}

@article{peham2022zx,
  title   = {Equivalence Checking of Quantum Circuits with the {ZX}-Calculus},
  author  = {Peham, Tom and Burgholzer, Lukas and Wille, Robert},
  journal = {IEEE Journal on Emerging and Selected Topics in Circuits and Systems},
  volume  = {12},
  number  = {3},
  pages   = {662--675},
  year    = {2022},
  note    = {arXiv:2208.12820}
}

@article{lewis2022formal,
  title   = {Formal Verification of Quantum Programs: Theory, Tools, and Challenges},
  author  = {Lewis, Marco and Soudjani, Sadegh and Zuliani, Paolo},
  journal = {ACM Transactions on Quantum Computing},
  volume  = {5},
  number  = {1},
  pages   = {1--35},
  year    = {2023},
  note    = {arXiv:2110.01320}
}

@article{usandizaga2023mutants,
  title   = {Quantum Circuit Mutants: Empirical Analysis and Recommendations},
  author  = {Mendiluze Usandizaga, E{\~n}aut and Yue, Tao and Arcaini, Paolo and Ali, Shaukat},
  journal = {arXiv preprint arXiv:2311.16913},
  year    = {2023}
}

@article{long2025oracle,
  title   = {A Black-box Testing Framework for Oracle Quantum Programs},
  author  = {Long, Peixun and Zhao, Jianjun},
  journal = {arXiv preprint arXiv:2505.07243},
  year    = {2025}
}

@inproceedings{grover1996,
  title     = {A Fast Quantum Mechanical Algorithm for Database Search},
  author    = {Grover, Lov K.},
  booktitle = {Proceedings of the 28th Annual ACM Symposium on Theory of Computing (STOC)},
  pages     = {212--219},
  year      = {1996}
}

@article{brassard2002amplitude,
  title   = {Quantum Amplitude Amplification and Estimation},
  author  = {Brassard, Gilles and H{\o}yer, Peter and Mosca, Michele and Tapp, Alain},
  journal = {Contemporary Mathematics},
  volume  = {305},
  pages   = {53--74},
  year    = {2002},
  note    = {arXiv:quant-ph/0005055}
}

@article{farhi2014qaoa,
  title   = {A Quantum Approximate Optimization Algorithm},
  author  = {Farhi, Edward and Goldstone, Jeffrey and Gutmann, Sam},
  journal = {arXiv preprint arXiv:1411.4028},
  year    = {2014}
}

@article{che2023trade,
  title={Trade-offs between coherence and mixedness and their evolution under quantum noise channels},
  author={Che, Xu-Jun and Tao, Yuan-Hong and Sheng, Yi-Hao and Wu, Shu-Hui and Fei, Shao-Ming},
  journal={Results in Physics},
  volume={52},
  pages={106794},
  year={2023},
  publisher={Elsevier}
}

\clearpage
\appendix

\section{Reproducibility and Release}
\label{app:release}

The code and data accompanying this paper, released at
\url{https://github.com/chexujun/QEncodeBench}, comprise: (i)~the frozen instance sets (480 core, 180
escalation) with per-instance golden solution sets; (ii)~the
generators with their deterministic seeding scheme: the core set,
fresh regeneration sets, and escalation set draw from three disjoint
seed domains, so any party can mint unlimited fresh instances that
provably do not overlap the frozen ones; (iii)~the verifier with its
16-mutant adversarial suite and the full self-check reports
(references, global-phase variants, cross-method agreement);
(iv)~every run configuration and raw log behind the paper.
Every number in the paper is
recomputable from these logs; verification is
local simulation only.

\section{Verifier Details}
\label{app:verifier}

\paragraph{Why not off-the-shelf verification?}
\emph{Equivalence checkers} (QCEC) decide circuit-vs-circuit
equivalence, so they need a golden reference circuit, but
constructing one is the task under test; moreover their fast path
(random basis-state simulation) is provably weakest exactly on
oracle-style errors, where a faulty multi-controlled gate leaves as
few as $2^{n-c}$ distinguishing
columns~\citep{burgholzer2020qcec}. \emph{ZX-calculus} checking is
incomplete for ancilla-assisted multi-controlled Toffolis, the
standard building block of constraint oracles, and cannot certify
inequivalence or produce counterexamples~\citep{peham2022zx}.
\emph{Formal verification} (quantum Hoare logic, proof assistants)
is interactive, bound to special-purpose languages, and lists
parameterized oracle programs as an open
challenge~\citep{lewis2022formal}. \emph{Statistical testing} of
output distributions is blind to phases ($|+\rangle$ vs
$|-\rangle$) and inherits open threshold/sample-size
problems~\citep{paltenghi2024survey,long2025oracle}; our
interferometric readout is deterministic on noiseless simulation
and needs no thresholds beyond numerical tolerance.

\paragraph{Self-validation, in full.}
(i)~\emph{Adversarial mutants}: 16 wrong-by-design oracles
targeting specific blind spots, namely dropped or flipped clauses,
missing uncomputation (dirty ancillas), off-by-one counter windows,
a mixed-up counter in the hybrid family, and two basis-permutation
mutants crafted to pass the uniform-state screen alone. All 16 are
rejected, and the two permutation mutants are confirmed to fool the
screen without fingerprints; the fingerprints are load-bearing.
Unlike random gate-level mutation used to evaluate \emph{testing
techniques}~\citep{usandizaga2023mutants,burgholzer2020qcec}, these
mutants are semantic and adversarial, and they calibrate the
\emph{verifier}. (ii)~\emph{Method agreement}: the fast path,
exhaustive simulation, and an independent unitary-level check agree
on 500 sampled verdicts (references and mutated variants) with zero
discrepancies. (iii)~\emph{Live spot checks}: on hash-selected real
candidates in every run, the fast path's verdict is cross-checked
against exhaustive simulation; any disagreement freezes the run.
(iv)~\emph{Global-phase regression}: every reference also passes as
its $-1$ and $e^{0.736i}$ global-phase variants.

\paragraph{Mark accuracy.} On L3 failures the verifier reports
\begin{equation}
\alpha(C)=\tfrac{1}{2^n}\,
\max\!\big(2^n-|\Shat\,\triangle\,\Sf|,\;|\Shat\,\triangle\,\Sf|\big),
\label{eq:acc}
\end{equation}
where $\triangle$ is symmetric difference and the $\max$ reflects
the two admissible global-phase hypotheses ($\Shat$ marks $\Sf$ or
its complement).

\paragraph{L1.} The candidate must parse; an AST walk over attribute
chains, plus a recursive scan over instruction names in the built
circuit, rejects oracle-lookup APIs
(\texttt{UnitaryGate}, \texttt{Operator}, \texttt{Statevector},
\texttt{DiagonalGate}, \texttt{Initialize},
\texttt{StatePreparation}, \texttt{PhaseOracle},
\texttt{HamiltonianGate}, \texttt{Isometry}, and their instruction
aliases). The recursive scan closes the loophole of hiding a banned
synthesis inside a custom gate wrapper.

\paragraph{L2.} \texttt{build\_oracle} runs in a subprocess sandbox
(30\,s wall clock); the resulting circuit is rejected if it exceeds
the qubit budget or contains non-unitary instructions
(\texttt{reset}, measurement, \texttt{initialize}), again checked
recursively through custom-gate definitions. The sandbox is a
crash-and-timeout isolation layer, not a security boundary: it
protects the evaluation from pathological candidate code while the
AST screen enforces the task contract; dynamic import, reflection,
and deliberate resource exhaustion sit outside the threat model,
which assumes non-adversarial model outputs run in a disposable
environment.

\paragraph{L3 fast path.} With
$|\psi\rangle = C\,(H^{\otimes n}\!\otimes\!I)|0\rangle$ and
$\psi_x=\langle x,0^a|\psi\rangle$, the checks are
leakage $1-\sum_x|\psi_x|^2\le\varepsilon_\ell=10^{-9}$;
magnitudes $\big||\psi_x|-2^{-n/2}\big|\le\varepsilon_m=10^{-6}$;
phases $\mathrm{dist}\big(\arg(\psi_x/\psi_{0^n}),\,
\pi(f(x){\oplus}f(0^n))\big)\le\varepsilon_\varphi=10^{-6}$.
Each fingerprint probe draws per-amplitude magnitudes from
$U(0.5,1.5)$ and phases from $U(0,2\pi)$ (normalized); the candidate
must act on it exactly as the diagonal read off the uniform state:
output magnitudes equal input magnitudes pointwise
($\varepsilon_m$), output phases consistent with the uniform-state
phase pattern after anchoring both tables at reference coordinate
$0$ ($\varepsilon_\varphi$), leakage within
$\varepsilon_\ell$.

\begin{proposition}[fast-path soundness]
\label{prop:soundness}
Let $C_b$ be the candidate's block matrix
$\langle y,0^a|C|x,0^a\rangle$ and $D$ the diagonal read off the
uniform state. (a)~If $C_b$ is diagonal, fast-path acceptance
implies Definition~\ref{def:correct} up to global phase, with
per-assignment phase error at most $\varepsilon_\varphi$
(deterministic). (b)~If instead
$\min_\gamma\|C_b-e^{i\gamma}D\|_{\max}=\eta$ with
$\eta\ge6\cdot2^{n/2}\varepsilon_m$ (misrouting, leakage, and phase
drift all appear in $\eta$; Appendix~\ref{app:verifier}), then for
every phase
alignment $\gamma$ fixed in advance, the probability over the two
independent fingerprints that both pass the fingerprint clauses at
alignment $\gamma$ (magnitudes preserved pointwise within
$\varepsilon_m$, phases consistent with $e^{i\gamma}D$ within
$\varepsilon_\varphi$) is at most
$\big(3\cdot2^{n/2}\varepsilon_m/\eta\big)^2$: for $n\le12$, a
misrouted basis state ($\eta\approx1$) escapes with probability at
most $4\times10^{-8}$.

\end{proposition}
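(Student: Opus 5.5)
Part (a) is deterministic and I would handle it first. If $C_b$ is diagonal, $C_b=\mathrm{diag}(c_x)$, then $\psi_x=2^{-n/2}c_x$. The leakage check bounds the weight that $C$ sends outside the $|\cdot,0^a\rangle$ block on the uniform input. Together with the pointwise magnitude checks this gives $|c_x|=1$ up to $2^{n/2}\varepsilon_m$. Since $C$ is unitary and each column of $C_b$ is nearly unit norm, no column can carry appreciable weight outside the block. So $C|x,0^a\rangle=c_x|x,0^a\rangle$ up to the tolerances, and the ancillas are restored.

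The phase check then gives $c_x=e^{i\theta}(-1)^{f(x)}$ with $e^{i\theta}=c_{0^n}(-1)^{f(0^n)}$, to within $\varepsilon_\varphi$ per assignment. This is Definition~\ref{def:correct} with $\theta$ independent of $x$. The point to stress is that the exact algebraic identity $\arg(c_x/c_{0^n})=\pi(f(x)\oplus f(0^n))$ fixes the global phase through the anchor; no probability enters here.

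For part (b) I would fix $\gamma$ in advance and write $E=C_b-e^{i\gamma}D$, so that $\|E\|_{\max}\ge\eta$. Pick an entry $(y,x)$ with $|E_{yx}|\ge\eta$. A fingerprint is a vector $v$ with $v_z=r_ze^{i\phi_z}/Z$, where the $r_z\sim U(0.5,1.5)$ and $\phi_z\sim U(0,2\pi)$ are independent. Passing the fingerprint clauses at alignment $\gamma$ means $C_bv$ agrees with $e^{i\gamma}Dv$ entrywise in magnitude within $\varepsilon_m$ and in phase within $\varepsilon_\varphi$. Since $|(Dv)_y|\le 1.5/Z$, both tolerances convert to one bound $|(Ev)_y|\le\delta$ with $\delta\approx 2\varepsilon_m+1.5\varepsilon_\varphi/Z$.

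Next I condition on every random variable except $\phi_x$. Then $(Ev)_y=A+E_{yx}r_xe^{i\phi_x}/Z$, with $A$ fixed. As $\phi_x$ varies, the second term traces a circle of radius $\rho=|E_{yx}|r_x/Z\ge \eta\cdot0.5/Z$. The event $|(Ev)_y|\le\delta$ requires $e^{i\phi_x}$ to lie in an arc of the circle of length at most about $2\delta/\rho$ in angle, so it has probability at most about $\delta/(\pi\rho)$. The normalizer satisfies $Z\le 1.5\cdot2^{n/2}$, and the $1/Z$ factors largely cancel between $\delta$ and $\rho$. Bookkeeping the constants should give a single-probe escape probability of at most $3\cdot2^{n/2}\varepsilon_m/\eta$. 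The hypothesis $\eta\ge6\cdot2^{n/2}\varepsilon_m$ keeps this ratio at most $1/2$, which makes the bound meaningful and lets the arc estimate avoid wrap-around. Because the two fingerprints are independent and $\gamma$ was fixed before drawing them, the escape probabilities multiply, giving the square. Plugging in $n=12$, $\varepsilon_m=10^{-6}$, $\eta=1$ yields $(3\cdot64\cdot10^{-6})^2\approx3.7\times10^{-8}$.

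The main obstacle is the small-ball step. Three things complicate it. The normalization $Z$ couples the amplitudes, the phase tolerance $\varepsilon_\varphi$ must be folded into the additive tolerance $\delta$, and misrouting, leakage and phase drift all have to be covered by the one quantity $\eta$. For the coupling, I would first note that $Z$ depends only on the magnitudes $r$, so it is independent of $\phi_x$, and the conditioning argument is unaffected. Leakage columns I would fold in by treating the out-of-block rows of $C$ as extra rows of $E$, using the leakage clause as the test for those rows. Finally, I would flag that the bound is per fixed $\gamma$: the implemented check instead sets $\gamma$ by anchoring on the probe itself, so the proof covers this efficiency bound and not the anchored procedure.
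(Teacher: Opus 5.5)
Your route is the same as the paper's. You fix $\gamma$, pick an entry with $|E_{yx}|\ge\eta$, condition on everything but $\phi_x$ (rightly noting that $Z$ depends only on the magnitudes), bound an arc, and square by independence. Your part (a) is fine, and a little more careful than the paper's, which simply takes $C_b=\mathrm{diag}(e^{i\theta_x})$ with clean ancillas.

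The gap is in the small-ball step. Since the proposition is essentially its constants, ``bookkeeping should give'' does not close it as written.

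First, your arc estimate runs the wrong way. The set of passing $\phi_x$ can have angular measure $2\arcsin(\delta/\rho)$, which exceeds $2\delta/\rho$; at $\delta/\rho=\tfrac12$ the worst case is $1/6>1/(2\pi)$. The valid linear bound is $2\arcsin u\le\pi u$, which gives $\Pr\le\delta/(2\rho)\le\delta Z/\eta$.

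Second, your tolerance carries a spurious extra $\varepsilon_m$. Since $D$ is the unimodular diagonal $e^{i\theta_x}$, the target has modulus exactly $|v_y|$. Writing the output entry as $\mu e^{i\alpha}$ and the target as $|v_y|e^{i\beta}$ gives $\big|\mu e^{i\alpha}-|v_y|e^{i\beta}\big|\le\big|\mu-|v_y|\big|+|v_y|\,\big|e^{i\alpha}-e^{i\beta}\big|\le t:=\varepsilon_m+1.5\,\varepsilon_\varphi/Z$. With your $\delta$ and the correct arc bound you get $(3\cdot2^{n/2}+1.5)\varepsilon_m/\eta$, which overshoots the claim. With $t$ (and $\varepsilon_\varphi=\varepsilon_m$) you get $tZ/\eta\le(1.5\cdot2^{n/2}+1.5)\varepsilon_m/\eta<3\cdot2^{n/2}\varepsilon_m/\eta$ for $n\ge1$.

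Third, what the hypothesis $\eta\ge6\cdot2^{n/2}\varepsilon_m$ must deliver is $\delta/\rho<1$ \emph{strictly}, not that the final ratio is at most $\tfrac12$. After rescaling, the passing arc lies in a disk of radius $\delta/\rho$. Only when $\delta/\rho<1$ is that arc shorter than a semicircle with chord at most $2\delta/\rho$. If $\delta\ge\rho$ and $A=0$, every $\phi_x$ passes and the probability is $1$, not $\delta/(2\rho)$.

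The ratio that must stay below $1$ is therefore $\delta/\rho\le2\delta Z/\eta$, which is twice the single-probe bound. The threshold is calibrated exactly so that $t/\rho\le\tfrac12+2^{-n/2-1}<1$. With your $\delta$ the same computation only gives $\delta/\rho\le1+2^{-n/2-1}$, so the arc step is not licensed near the threshold.

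Finally, drop the plan of adding out-of-block rows to $E$. The statement's $\eta$ is defined on the block $C_b$, and the leakage clause controls those extra rows only to $\sqrt{\varepsilon_\ell}\approx30\,\varepsilon_m$. The device is also unnecessary: by unitarity, a column $s$ leaking mass $L$ has $|(C_b)_{ss}|\le\sqrt{1-L}$. Hence $|E_{ss}|\ge1-\sqrt{1-L}\ge L/2$, so leakage already sits inside $\eta$.
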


\begin{proof}[Proof of Proposition~\ref{prop:soundness}]
\emph{Part (a).} If $C_b=\mathrm{diag}(e^{i\theta_x})$ with clean
ancillas, the uniform-state amplitudes are exactly
$2^{-n/2}e^{i\theta_x}$. The leakage and magnitude conditions hold
identically, and the phase condition checks
\[
\mathrm{dist}\big(\theta_x-\theta_{0^n},\;
\pi(f(x){\oplus}f(0^n))\big)\;\le\;\varepsilon_\varphi
\]
pointwise, which is Definition~\ref{def:correct} up to global phase
with per-assignment error at most $\varepsilon_\varphi$. No
randomness is involved.

\smallskip
\emph{Part (b): setup.} Fix the alignment $\gamma$ from the
statement and write $M=C_b-e^{i\gamma}D$; since $\eta$ is the
minimum over alignments, $\|M\|_{\max}\ge\eta$, so we may pick an
entry with $|M_{rs}|\ge\eta$. A
fingerprint is
\[
c_y=\frac{m_y}{N}\,e^{i\phi_y},\qquad
m_y\sim U(0.5,1.5),\quad \phi_y\sim U(0,2\pi)\ \text{i.i.d.},
\]
with $N^2=\sum_y m_y^2$, so that
$0.5\cdot2^{n/2}\le N\le 1.5\cdot2^{n/2}$ and $|c_y|\le1.5/N$ for
every $y$.

\smallskip
\emph{What passing at alignment $\gamma$ implies.} The magnitude
clause bounds $\big||\mathrm{out}_x|-|c_x|\big|\le\varepsilon_m$,
and the phase clause at alignment $\gamma$ bounds the pointwise
phase error relative to $e^{i\gamma}$ by $\varepsilon_\varphi$.
Together, for every $x$,
\[
\big|\mathrm{out}_x-e^{i\gamma}e^{i\theta_x}c_x\big|
\;\le\;\varepsilon_m+|c_x|\,\varepsilon_\varphi
\;\le\;t:=\varepsilon_m+\tfrac{1.5}{N}\,\varepsilon_\varphi .
\]
Because the same $\gamma$ appears in $M$, the residual at row $r$
obeys $|(Mc)_r|\le t$.  (This step is where fixing the alignment is
load-bearing: with the implementation's probe-anchored reference
phase in place of $\gamma$, the decomposition below would lose the
independence of $A$ and $R$ from $\phi_s$, and the small-ball step
would not apply; see the scope note after the proof.)

\smallskip
\emph{Small-ball step.} Isolate the contribution of the single
random phase $\phi_s$:
\[
(Mc)_r=A\,e^{i\phi_s}+R,\qquad
A=\frac{M_{rs}\,m_s}{N},\qquad
|A|\;\ge\;\frac{0.5\,\eta}{N},
\]
where $R$ collects the remaining terms and is independent of
$\phi_s$. Under the hypothesis $\eta\ge6\cdot2^{n/2}\varepsilon_m$
we have $t/|A|\le 2tN/\eta<1$ \emph{strictly} (see the constants
step), so the arc geometry below applies. Strictness matters: at
the degenerate corner $R=0$, $t=|A|$ the event is the whole circle
and the chord bound fails; with $t<|A|$ the case $R=0$ has empty
event, and for $R\ne0$ the event $|Ae^{i\phi_s}+R|\le t$ confines
$e^{i\phi_s}$ to an arc whose endpoints lie in a disk of radius
$\rho:=t/|A|$: the arc has half-angle $\theta^*$ with
$\cos\theta^*\ge0$ (from $\rho\le1$) and $\sin\theta^*\le\rho$
(endpoints within chord $2\rho$), hence measure
$2\theta^*\le2\arcsin\rho\le\pi\rho$, so
\[
\Pr\big[\,|(Mc)_r|\le t\,\big]
\;\le\;\frac{\arcsin(t/|A|)}{\pi}
\;\le\;\frac{t}{2|A|}
\;\le\;\frac{tN}{\eta}.
\]

\smallskip
\emph{Constants.} With $N\le1.5\cdot2^{n/2}$ and
$\varepsilon_\varphi=\varepsilon_m$,
\[
tN=\varepsilon_m N+1.5\,\varepsilon_\varphi
\;\le\;1.5\cdot2^{n/2}\varepsilon_m+1.5\,\varepsilon_m
\;<\;3\cdot2^{n/2}\varepsilon_m ,
\]
the last inequality strict for $n\ge1$ (this is what makes
$t/|A|<1$ strict above, with room to spare: in fact
$t/|A|\le\tfrac12+2^{-n/2-1}$). A single fingerprint therefore
passes at alignment $\gamma$ with probability at most
$3\cdot2^{n/2}\varepsilon_m/\eta$. The two fingerprints are drawn
independently, which gives the square.

\smallskip
\emph{Leakage reduction.} If basis state $s$ leaks mass $L$ out of
the computational block, the block column satisfies
$\|C_b e_s\|^2=1-L$, hence $|(C_b)_{ss}|\le\sqrt{1-L}$ and
\[
|M_{ss}|\;\ge\;1-\sqrt{1-L}\;\ge\;L/2 ,
\]
so leakage enters the bound through $\eta\ge L/2$.
\end{proof}

\noindent\emph{Scope of the guarantee.} Two freedoms are outside
the bound. \emph{Alignment.} The proposition bounds acceptance at
each phase alignment fixed in advance, whereas the implemented
check anchors the probe's phase table at a reference coordinate
(coordinate $0$), so the alignment a probe is judged against is a
function of the probe itself. The reduction from the anchored event
to a fixed alignment does not follow from the arguments above (the
alignment ranges over a continuum, so a union bound is unavailable,
and a probe-dependent alignment breaks the independence used in the
small-ball step); we therefore state and prove the per-alignment
form. A candidate could exploit the anchoring only by
presenting a different near-diagonal match to different probes;
the adversarial mutant suite shows no such behavior, and every
acceptance is confirmed exhaustively regardless. \emph{Seed.} The
bound treats the candidate as fixed and the fingerprints as random;
operationally the fingerprint seed is fixed and published for
reproducibility, so the guarantee covers errors made independently
of that seed, which is the case for model-generated circuits. A hypothetical adversary who crafts a circuit against the published
seed is outside the bound, but cannot reach a verdict either way:
the exhaustive confirmation of every acceptance is seed-free.
The two permutation mutants (M13, M14) empirically confirm both
halves of the proposition: they pass the uniform-state screen alone
and are rejected once the fingerprints run.

\paragraph{Exhaustive confirmation of all acceptances.} Every L3 acceptance reported in this paper carries an exhaustive
per-basis-state verdict: all accepted samples, across the in-house
runs and the delivered rows (the open-weight table, both Claude
rows, and the GPT-5.6 Terra row), were re-run with the exhaustive
stage forced, and no fast-path acceptance was overturned. The confirmation harness and per-sample logs are included in the
released code and data.

\paragraph{Escalation and authority.} The fast path escalates to
exhaustive per-basis-state simulation when any edge signal fires:
phase deviation $>\varepsilon_\varphi/10$, leakage
$>\min(10^{-10},\,\varepsilon_\ell/(10\cdot 2^n))$ (the leak edge
scales with dimension because exhaustive bounds leakage per state),
or fingerprint magnitude deviation $>\varepsilon_m/10$, subject to
an in-process cost gate $2^n\cdot\mathrm{size}(C)\le 5\times10^8$.
Exhaustive verdicts are authoritative in both directions.
Independently, a hash-selected 5\% of all real candidates are
cross-checked exhaustively regardless of edge signals; any
disagreement freezes the run. Failure diagnosis (marked set, mark
accuracy, counterexamples) runs exhaustively for $n\le10$ and
circuits up to 30{,}000 instructions, reporting up to 8
counterexamples.

\paragraph{Isolation.} Circuits containing instructions wider than
six qubits, or otherwise flagged as simulation risks, run L3 in an
isolated subprocess (180\,s); transpilation for the L4 depth check
is likewise isolated (120\,s). Timeouts score as failures with a
distinct reason code, never as passes.

\paragraph{The 16 adversarial mutants.} Each targets a specific
blind spot; all 16 are rejected by the full verifier.
M01 dropped constraint; M02 partial phase flip; M03 dirty ancilla
(missing uncomputation); M04 flipped literal polarity; M05
off-by-one qubit index; M06 missing phase application; M07 AND for
OR aggregation; M08 wrong phase pattern; M09 doubled phase
(self-cancelling); M10 non-surjective color comparator; M11 wrong
target value; M12 wrong cardinality bound; M13 basis swap; M14
phased 3-cycle tuned to cancel on the uniform state; M15 rolling
counter window off by one; M16 hybrid-family counter mix-up
(cardinality check wired to the clause counter).

\paragraph{Simulation hygiene.} The exhaustive path injects each
basis state directly into the simulator; the injected state is kept
outside the transpiler (only the payload circuit is transpiled,
under a fixed transpiler seed), because optimizer passes may
otherwise commute phase gates across the state injection and
corrupt per-basis phases. A regression test pins this requirement.

\section{Budget Formula Derivations}
\label{app:budgets}

Budgets equal the qubit footprint of the tier's reference
construction; $\wcw(x)=\lceil\log_2(x{+}1)\rceil$ counts counter
bits. \emph{Flag tiers}: one flag per constraint plus one slack
ancilla for models that prefer computing the aggregate into a final
AND ancilla: 3-SAT $n+m+1$; coloring $2V+E+1$ (two qubits per
vertex under the surjective color code); Latin squares $2F+P+1$
($F$ free cells, $P$ constraint pairs); string matching
$n+\mathit{offs}+1$ (one flag per pattern offset), or $n+2$
anchored. \emph{Counter tiers}: one shared constraint flag plus a
$\wcw(\cdot)$-bit counter, phase on ``count equals all'': 3-SAT
$n+1+\wcw(m)$; coloring $2V+\wcw(E)+1$; Latin squares
$2F+\wcw(P)+2$; string matching $n+1+\wcw(\mathit{offs})$ (phase on
count $\ge$ 1). \emph{Registers that cannot be compressed}: subset
sum keeps the raw sum register, $n+\wcw(\sum_i v_i)$ at every tier;
vertex cover needs edge flags and a weight counter, $n+E+\wcw(n)$;
the hybrid family needs clause flags and a popcount counter,
$n+m+\wcw(n)$. \emph{Zero-slack tier}: the exact footprint of a
rolling-computation reference that recomputes each constraint into
one shared flag; vertex cover drops the per-edge flags for an edge
counter, $n+1+\wcw(E)+\wcw(n)$, the tightest cell in the benchmark.

Two candidate escalation cells were rejected by measurement, per
the pre-registered yield rule. Coloring at scale ($V{=}6$,
$E\ge10$): across 30 sampled instances the maximum solution density
was $0.0137 < 1/64$; the family already sits at the density floor
at its top graded tier. Latin squares on a $4\times4$ board with an
exact (non-surjective) code: across 3{,}000 sampled specifications
the maximum solution count was 2, far below the $\rho\ge1/64$
floor. Accepted escalation cells had healthy yield: the worst
rejection rate during freezing was 5\% (string matching,
zero-slack); all other cells rejected 0\%.

\section{Prompts and Contracts}
\label{app:prompts}

\paragraph{Instance prompt.} Static material first (role, rules,
the interface contract below, one worked toy example), instance
material last (problem text with the formal specification, the
qubit budget, and the depth cap), an ordering chosen so the long
static prefix is cacheable across the whole benchmark. The terse
variant compresses the problem rendering only; the contract is
identical (robustness: 10.5\% vs 10.2\% semantic pass@1, no-think).
The contract requires
\texttt{build\_oracle(qc, problem\_qubits, ancilla\_qubits)} to
append gates only, realize the $(-1)^{f(x)}$ phase semantics of
Definition~\ref{def:correct}, restore all ancillas to
$|0\rangle$, and avoid the
banned high-level APIs of Appendix~\ref{app:verifier}.

\paragraph{Unit-agent contract and bench.} The model receives the
constraint list and writes
\texttt{build\_constraint\_j(qc, problem\_qubits, flag, scratch)}
per predicate, contracted to Eq.~\eqref{eq:fragment}: flip
\texttt{flag} iff
$p_j(x)$, no phase side effects, scratch returned to
$|0\rangle$. Unit checks run each fragment in isolation on a bench
whose wiring deliberately differs from deployment (flag on the last
wire, scratch first), so fragments that hardcode wire indices fail
their unit checks rather than the assembled circuit; the checks are
predicate agreement on all $2^n$ inputs, phase cleanliness, scratch
restoration, and a fingerprint identity test. Failing fragments are
repaired in at most two batched rounds with per-fragment
counterexamples. Assembly is deterministic: compute all flags,
phase on the aggregate, uncompute in mirror order. The plan is chosen by budget: per-flag when $m+s$ ancillas
fit; rolling (one shared
flag, a $\wcw(m)$ counter) otherwise; hold-out rolling (roll
$m{-}1$ fragments, keep the last computed through the phase) for
budgets where counting to $m$ needs one more bit than counting to
$m{-}1$.

\paragraph{NSC contract.} The model returns one JSON term over the
seven IR primitives. Two worked examples ship in the prompt; e.g.\
a 3-SAT clause set becomes
{\footnotesize\begin{verbatim}
{"op":"and","args":[
 {"op":"or","args":[
   {"op":"bit","i":0,"val":1},
   {"op":"bit","i":3,"val":0},
   {"op":"bit","i":4,"val":1}]},
 ...]}
\end{verbatim}}
\noindent and a subset-sum instance becomes
{\footnotesize\begin{verbatim}
{"op":"linsum_cmp",
 "terms":[[5,0],[3,1],[7,2]],
 "cmp":"==","rhs":10}
\end{verbatim}}
\noindent The interpreter validates the term against the instance's
solution set on all $2^n$ assignments, returning concrete
counterexamples for at most two retry rounds. The compiler selects
among five plans: per-child flags with one aggregate phase; a raw
linsum register with phase enumeration over matching values; a
rolling counter (AND roots phase on ``count equals all,'' OR roots
on every nonzero count); and rolling combined with a linsum
register. Contradictory or constant subterms are folded before
planning; every compiled circuit still passes the full verifier.

\section{Agent-Harness Protocols}
\label{app:harness}

Three rows of Table~\ref{tab:main} were produced through agent
harnesses rather than the raw model API. They are indicative
reference points: the harness system prompt differs from a raw API
call, so they are not strictly protocol-comparable to the API
rows.

\paragraph{Claude rows ($^{\dagger}$).} Claude Opus 4.8 and Claude
Haiku 4.5 were invoked through the Claude Code agent harness, one
attempt per instance, with extended thinking and tool use
disabled.

\paragraph{GPT-5.6 Terra row ($^{\ddagger}$).} GPT-5.6 was invoked
at the Terra tier of the Codex agent, one fresh subagent per
instance, answering in one shot without tool use.
Every solver message is the frozen verbose prompt itself: each of
the 480 payloads is rendered by \texttt{build\_prompt}, archived,
and SHA-256-hashed per instance, and we re-derived all 480
renderings independently with byte-identical results. All responses were re-verified in our environment with every
acceptance confirmed exhaustively; three collaborator-side
verification timeouts pass under our longer compute budget and are
scored as passes. The per-tier profile remains non-monotone
(82.3/88.0/79.1).

\paragraph{Per-family, per-tier direct generation (semantic
$\passk{1}$, \%).} Graded tiers 1/2/3; reasoning off $\to$ on:
3-SAT 8.2/9.7/6.1 $\to$ 58.8/57.6/36.4;
coloring 1.8/5.5/0.0 $\to$ 41.2/33.3/0.0;
vertex cover 0.0/0.9/2.6 $\to$ 20.8/13.0/4.3;
subset sum 15.8/11.3/1.7 $\to$ 91.7/60.9/34.8;
Latin squares 38.3/13.0/5.2 $\to$ 91.7/69.6/34.8;
string matching 34.2/25.2/14.8 $\to$ 58.3/78.3/47.8.
Coloring at tier 3 is unsolved in both modes (0/33 tasks), and
vertex cover never exceeds 21\%: the two families that require
dedicated encoding techniques dominate the difficulty at every
tier.

\paragraph{Open-weight serving and token budgets.} All open-weight
models were served with vLLM (OpenAI-compatible endpoint);
Qwen2.5-Coder-32B, Qwen3-32B, and Qwen3-235B-A22B-Thinking in
AWQ int4,
DeepSeek-R1-0528-Qwen3-8B in bf16, gpt-oss-20b/120b in MXFP4.
Prompts are byte-identical to the in-house runs (SHA-256 verified
per instance). Output budgets are sized so that no model is limited
by its chain-of-thought length: 3k tokens for non-reasoning models,
32k for Qwen3/R1 reasoning models (their chains fit well inside it),
and up to 90k for high-effort gpt-oss. A response still reasoning
at its cap yields no code and scores as a failure ($\sim$7\% of
gpt-oss-20b instances; no other model is affected). The gpt-oss
L3$\to$L4 gaps (8.4 points / 12.2 points) are exclusively
\texttt{DEPTH\_EXCEEDED}: transpiled depth at $4.5$--$10\times$ the
reference oracle (median $5.5\times$) against the $4\times$
budget.

\section{Additional Results}
\label{app:results}

\paragraph{Degraded-scoring detail.} Every direct-generation sample
behind Table~\ref{tab:main} is re-scored with its L3 verdict as
ground truth; basis-state checks draw $k\in\{16,64\}$ seeded per
task, probes compare four random product states against the ideal
output state. Table~\ref{tab:scoring} gives the per-row rates.

\begin{table}[t]
\centering
\small
\setlength{\tabcolsep}{4.5pt}
\begin{tabular}{@{}llccc@{}}
\toprule
Model / mode & Tests & Scored & True & FP \\
\midrule
\multirow{2}{*}{Qwen2.5-Coder-7B} & basis-64 & 27.7 & 0.0 & 100\% \\
 & probes-4 & 0.0 & 0.0 & --- \\
\addlinespace
\multirow{2}{*}{DeepSeek-V4-Flash} & basis-64 & 35.1 & 10.2 & 71.0\% \\
 & probes-4 & 10.3 & 10.2 & 1.2\% \\
\addlinespace
\multirow{2}{*}{\;\;+ thinking} & basis-64 & 66.2 & 45.4 & 31.4\% \\
 & probes-4 & 45.4 & 45.4 & 0.0\% \\
\bottomrule
\end{tabular}
\caption{Degraded-scoring study on the core set. ``Scored'' is
$\passk{1}$ under the degraded regime, ``True'' the L3 rate, ``FP''
the fraction of accepted circuits that are semantically wrong
(AlphaCode convention;~\citealp{li2022alphacode}).}
\label{tab:scoring}
\end{table}

\paragraph{Depth-cap sensitivity.} Holding the qubit gate and
transpiler settings fixed and sweeping the depth cap over
$\{2,4,8\}\times d_{\mathrm{ref}}$ (the published column is
$4\times$), full pass rates are: direct, reasoning off (480 set)
8.5/9.3/10.0; direct, reasoning on (480) 38.8/40.8/44.0; and on the
126-task subset, direct thinking 43.1/45.0/47.4, unit agent
63.0/80.4/89.7, NSC 90.7/91.8/92.1. The NSC-over-unit ordering
holds at every multiplier; the gap is 27.7 points at $2\times$ and 2.4 points
at $8\times$.

\paragraph{Depth premium by architecture.} The semantic-to-full gap
(L4 depth gate) is architecture-dependent (126-task subset): direct
3.9 points, few-shot 3.4 points, decomposition 0.8 points, self-repair 0 points, NSC
0.3 points. The unit
agent, by contrast, pays 12.7 points, and its gap \emph{grows with
repair dose} (54-task ablation subset: 8.0 points at 0
rounds, 9.2 at 1, 13.6 at 2; 11.7 under which-only feedback): each
repair round tends to add gates, so unit-level repair buys
semantic correctness partly at the price of depth. NSC's compiled
circuits stay well inside the cap.

\paragraph{Sampling curves.} Direct, reasoning on:
$\passk{1/2/3} = 48.9/66.4/73.0$. Direct, reasoning off:
$\passk{1/3/5} = 10.8/23.7/32.5$ (126-task subset).

\paragraph{Escalation, full gate.} L4-gated rates track the
semantic rates of Table~\ref{tab:escalation} within two points throughout:
$\passk{3}$(L4) is 81.5\% (unit agent), 100\% (NSC), 24.1\%
(direct).

\paragraph{Failure gallery.} The verifier's single-edit diagnosis
(matching $\Shat(C)$ against libraries of plausible mis-encodings:
dropped constraint, flipped literal polarity, off-by-one bounds,
non-surjective color decoding) exactly explains a substantial
fraction of L3 failures, and error patterns are consistent across
instances, echoing~\citet{yang2025qcircuitbench}. Two
representative L3 failures from the reasoning-on direct run. (i)~\texttt{3sat-T1-s0008},
\textsc{mark-mismatch} with mark accuracy $0.875 = 7/8$: exactly
the signature of one ignored three-literal clause (a single clause
excludes $1/8$ of assignments); counterexamples 5 and 10 satisfy
the other clauses but violate the dropped one. (ii)~%
\texttt{3sat-T3-s0029}, \textsc{ancilla-dirty}: the candidate
computes clause flags correctly but skips uncomputation; the
residual entanglement destroys the interference pattern even
though every clause flag is individually correct: the failure
mode that basis-state testing cannot see and that motivates mutant
M03.

\end{document}